\newcommand\numberthis{\addtocounter{equation}{1}\tag{\theequation}}
\newcommand{\avg}[1]{\ensuremath{\left\langle #1\right\rangle}}
\newcommand{\psq}[1]{\left[ #1\right]}
\DeclareMathOperator{\EX}{\mathbb{E}}% expected value
\numberwithin{equation}{section}
\long\def \beq#1\eeq {\begin{equation} #1 \end{equation}}
\long\def \beaq#1\eeaq {\begin{equation}\begin{aligned} #1 \end{aligned}\end{equation}}
\long\def \bes#1\ees {\begin{equation}\begin{split} #1 \end{split} \end{equation}}
\long\def \bea#1\eea {\begin{eqnarray} #1 \end{eqnarray}}
\long\def \bse[#1]#2\ese {\begin{subequations}\label{#1}\begin{align} #2 \end{align}\end{subequations}}
\newcommand{\mv}[1]{\langle #1\rangle}
\long\def\dm[#1]{\!\operatorname{d\mu}\left(#1\right)}
\theoremstyle{plain}
\newtheorem{Remark}{Remark}
\newtheorem{Proposition}{Proposition}
\newtheorem{Definition}{Definition}
\title{Interpolating between boolean and extremely high noisy patterns through Minimal Dense Associative Memories}
\author[a,b]{Francesco Alemanno}
\author[a]{Martino Centonze}
\author[a]{Alberto Fachechi}
\affiliation[a]{Dipartimento di Matematica e Fisica Ennio De Giorgi, Universit\`a del Salento, Lecce, Italy}
\affiliation[b]{CNR-Nanotec, Sezione di Lecce, Italy}
\affiliation[c]{Istituto Nazionale di Fisica Nucleare, Sezione di Lecce, Italy}
\abstract{Recently, Hopfield and Krotov introduced the concept of {\em dense associative memories} [DAM] (close to spin-glasses with $P$-wise interactions in a disordered statistical mechanical jargon): they proved a number of remarkable features these networks share and suggested their use to (partially) explain the success of the new generation of Artificial Intelligence.  Thanks to a remarkable ante-litteram analysis by Baldi \& Venkatesh, among these properties, it is known these networks can handle a maximal amount of stored patterns $K$ scaling as $K \sim N^{P-1}$.
\newline
In this paper, once introduced a {\em minimal dense associative network} as one of the most elementary cost-functions falling in this class of DAM, we sacrifice this high-load regime -namely we force the storage of {\em solely} a linear amount of patterns, i.e. $K = \alpha N$ (with $\alpha>0$)- to prove that, in this regime, these networks can correctly perform pattern recognition even if pattern signal is $O(1)$ and is embedded in a sea of noise $O(\sqrt{N})$, also in the large $N$ limit. To prove this statement, by extremizing the quenched free-energy of the model over its natural order-parameters (the various magnetizations and overlaps), we derived its phase diagram, at the replica symmetric level of description and in the thermodynamic limit: as a sideline, we stress that, to achieve this task, aiming at cross-fertilization among disciplines, we pave two hegemon routes in the statistical mechanics of spin glasses, namely the replica trick and the interpolation technique.
\newline
Both the approaches reach the same conclusion: there is a not-empty region, in the noise-$T$ vs load-$\alpha$ phase diagram plane, where these networks can actually work in this challenging regime; in particular we obtained  a quite high critical (linear) load in the (fast) noiseless case resulting in $\lim_{\beta \to \infty}\alpha_c(\beta)=0.65$.}
\begin{document}

\maketitle

\section{Introduction}
Due to an increase in the GPU processing power \cite{40,62}, availability of large data-sets for training stages \cite{WWW} and the (deep) multi-layer architectures where neural networks can finally be embedded \cite{Hinton1,arc1}, their impressive skills -overall termed {\em deep learning} \cite{DL0}- keep achieving successes in the most disparate fields of Science and Technology \cite{Ching} (particularly outperforming at work in biomedical imaging, where they -noawadays- detect patterns possibly earlier than humans \cite{NatRev}).
\newline
Despite a number of remarkable progresses (e.g. \cite{BarraPeter,Cocco,Decelle,Barbier,Huang,Florent,Remi,Metha,Mezard}), these computational successes yet lack a full theoretical bulk behind (e.g. as made available in the pairwise limit of shallow networks as Hopfield and Boltzmann machines \cite{Coolen}), hence the quest for a {\em rationale} where different know-how(s) possibly merge is nowadays mandatory in the agenda of several research groups, ranging from Computer Science to Applied Mathematics (possibly crossing Theoretical Physics at its proliferative intersection offered by Statistical Mechanics of Spin Glasses).
\newline
In these regards, recently, Hopfield and Krotov proposed as an underlying bridge between deep neural networks and dense associative memories \cite{HopfieldKrotov_DAM,HopfieldKrotov_DL} (the latter being  P-spin extensions \cite{P-spin1,P-spin2} of the celebrated Hopfield classical pairwise limit \cite{Hopfield}) proving how these higher-order cost functions are more robust against adversarial and rubbish inputs.
\newline
Furthermore, this class of neural networks was deeply analyzed by Venkatesh $\&$  Baldi and Bovier $\&$ Niederhauser in the past \cite{Baldi,BovierPspin} and it is known that -calling $K$ the number of patterns to handle, $N$ the amount of neurons to accomplish the task and $P$ the order of their interactions- their critical capacity scales as $K \propto N^{P-1}$ (and collapses to the standard one, i.e. $K = 0.14 N$, in the known pairwise limit of $P=2$ \cite{Amit}).
\newline
Recently some of the Authors addressed the Statistical Mechanical analysis of a generalized RBM introduced in the literature by Terrence Sejnowski in 1984 \cite{Semio}
%studied a modification of the standard Boltzmann machine, a {\em Sejnowski machine} \cite{Semio}  
and proved that it was able to perform pattern recognition of patterns whose intensity stays $O(1)$ even in a sea of noise $O(\sqrt{N})$ in the large $N$ limit  \cite{sej}. It was also shown a dual representation of this network in terms of a peculiar form of the class of models suggested by  Hopfield and Krotov \cite{sej}: as in the pairwise counterpart \cite{BarraEquivalenceRBMeAHN}, this duality played as a crucial step to explain this skill of these machines as they can be obtained by keeping the network's load away from the maximal regime (the Baldi $\&$ Venkatesh limit \cite{Baldi}). We stress that the inspection of the low-storage regimes for these networks already started in \cite{Albert1}.
\newline
Here we continue along this investigation \cite{Albert1,sej}, focusing on pattern recognition at extremely low signal-to-noise ratios, by  proving that such a skill is not peculiar to the Sejnowski machine (see Appendix C): still focusing on four-wise interactions among discrete neurons, it holds for a broader class of Hopfield and Krotov models (w.r.t. one used in \cite{sej}) that we call {\em Minimal Dense Associative Memory} (MDAM).  In particular we provide a phase diagram for the MDAM, at the replica symmetric level of description and in the linear-storage regime, to show that there is a huge region in the plane of the two tunable parameters -load $\alpha$ and noise $\beta$- where this phenomenon happens (here $\lim_{\beta \to \infty}\alpha_c(\beta)=0.65$).
\newline
For the sake of cross-fertilization, we present our results paving at first the standard route of the replica trick \cite{Coolen}, then confirming the picture obtained by the RS-ansatz by suitably adapting to the case a Guerra's interpolation scheme  \cite{Agliari-Barattolo,Guerra}.

\section{Minimal Dense Associative Memory}

Here we introduce a minimal cost function of the form suggested by Hopfield and Krotov, namely the {\em minimal dense associative memory} (MDAM).
\begin{Definition} %Once considered $K$ patterns $\bold{\xi}$ of length $N$ whose entries are boolean variables $\pm 1$, the model cost-function is defined as
\beq\label{Fra1}
\calH(\boldsymbol{\sigma}|\boldsymbol{\eta}) = -\frac{1}{2N^3}\sum_{\mu=1}^K\Big(\sum_{i,j=1}^N \eta^\mu_{ij} \sigma_i\sigma_j \Big)^2
\eeq
where $\sigma_i=\pm 1$, $i \in (1,...,N)$, are Ising spin and $\eta^\mu_{ij}$ is the symmetric synaptic tensor.
\end{Definition}
Our goal is to prove how this model can retrieve patterns of information also when they are immersed in a background of a $\mathcal{O}(\sqrt{N})$ gaussian noise. This result can be achieved by requiring the network to store only $\mathcal{O}(N)$ patterns instead of the theoretical upper limit of $\mathcal{O}(N^3)$. In order to do that, we introduce the following decomposition of the synaptic tensor
\begin{Definition}\label{etas} %The memory tensor of the model is defined as
The load of the network $\alpha$, as anticipated, is defined as
\beq
\alpha=\lim_{N\to \infty} \frac{K}{N},
\eeq
while, the signal+noise decomposition reads
\beq\label{decomposition}
\eta^\mu_{ij}=\frac{1}{\sqrt{1+\alpha}}(\xi^\mu_{ij}+\sqrt{K}J^\mu_{ij}),
\eeq
where $\xi^\mu_{ij}$ is the tensor, with entries $\pm 1$, constituting the "matrix" signal, while the noise is embedded in the symmetric tensor $J^\mu_{ij}$, whose entries are \emph{i.i.d.} $\mathcal{N}(0,1)$ variables.
%with the additional factorization constraint $\xi^1_{ij}=\xi^1_i\xi^1_j$.
\end{Definition}
\begin{Remark}
%The definition of $\alpha$ involves the thermodynamic limit; this is a standard definition for the load of a network. Along this paper, we fix this value to be constant in $N$.
The noise is given by the product $\sqrt{K}J^\mu_{ij}$, which in the thermodynamic limit globally scales as $\mathcal{O}(\sqrt{N})$, as $K\sim N$.
\end{Remark}
%\begin{Remark}
%One could be tempted to define a matrix signal $\xi_{ij}$. Anyway, this would force the system to align the product $\sigma_i \sigma_j$ in the direction given by $\xi_{ij}$, in order to maximize the "matrix magnetization"
%\beq
%M=\frac{1}{N^2} \sum_{i,j} \xi_{ij} \sigma_i \sigma_j \sim 1.
%\eeq
%However, this holds only for states that can be fully factorized as $\xi_{ij}=\xi_i \times \xi_j$. Roughly speaking, this is due to the fact that the matrix $\xi_{ij}$ has $\mathcal{O}(N^2)$ degrees of freedom, in contrast to the solely $\mathcal{O}(N)$ of a $N$-spin network. Therefore, it makes only sense to work directly with factorized patterns of information in the form $\xi_{ij}=\xi_i \times \xi_j$.
%\end{Remark}
We are interested in the study of the retrieval phase of the network; for simplicity we restrict ourselves to the study of the retrieval of pure states. The retrieved pattern is arbitrary: we just denote it by $\xi^1_{ij}$; the remaining states $\xi^\mu_{ij}$, with $\mu>1$, will then constitute a quenched noise for the system. Hence, we perform a quenched average over the $P-1$ remaining states $\xi^\mu_{ij}$ together with the amplified $J^\mu_{ij}$ noise, by introducing the following expectation operator
\begin{align}
\EX \equiv \Big(\prod_{i,j,\mu>1}^{N,N,K} \frac{1}{2} \sum_{\xi^\mu_{ij}=\pm 1}\Big) \Big(\prod_{i,j,\mu\geq 1}^{N,N,K} \int DJ^\mu_{ij}\Big).
\end{align}
As usual, all the thermodynamic properties can be derived from the quenched pressure\footnote{Notice that the pressure is strictly related to the quenched intensive free energy $f$ as $A=-\beta f.$}
\begin{Definition}
In the thermodynamic limit, the quenched pressure reads
\beq\label{pressure}
A_N=\lim_{N\to \infty}\frac{1}{N}\bbE \ln Z_N,%=\frac{1}{N}\bbE \ln \sum_{\sigma}\exp\psq{\frac{\beta}{2N^3}\sum_{\mu=1}^K\Big(\sum_{i,j=1}^N \eta^\mu_{ij} \sigma_i\sigma_j\Big)^2}.
\eeq
where $Z_N$ is the partition function, defined as
\begin{align}
Z_N=\sum_{\boldsymbol{\sigma}} \exp\left(-\beta \calH\right)=\sum_{\boldsymbol{\sigma}}\exp\Big(\frac{\beta}{2 N^3}\sum_{\mu=1}^{K}\big(\sum_{i,j=1}^{N} \eta^\mu_{ij}\sigma_i \sigma_j\big)^2\Big).
\end{align}
%Note that the partition function of the model is tacitely given by
%\beq
%Z=\sum_{\sigma}\exp\pto{\frac{\beta}{2N^3}\sum_{\mu=1}^K\Big(\sum_{i,j=1}^N \eta^\mu_{ij} \sigma_i\sigma_j \Big)^2}
%\eeq
%and the quenched averaging operation is defined as
\end{Definition}
%and where $\bbE_{\eta}$ is a boolean average for $\xi^1_{ij}$ and a gaussian average for $\xi^\mu_{ij}$ with $\mu\ge2$.
%\newline
\begin{Remark}
The partition function $Z_N$ can be written by introducing auxiliary Gaussian variables as follows
\begin{align}\label{Z}
Z_N=\int Dz \sum_{\boldsymbol{\sigma}}\exp\Big(\sqrt{\frac{\beta}{N^3}}\sum_{\mu=1}^{K} \sum_{i,j=1}^{N} \eta^\mu_{ij}\sigma_i \sigma_j z_\mu\Big),
\end{align}
where $\int Dz \equiv \int \prod_{\mu=1}^{K} Dz_\mu$ and $Dz_\mu$ the $\mathcal{N}(0,1)$ measure relative to the $\mu$ component of the vector $z_\mu$. We stress that, written in this form, the partition function is equivalent to that of a bi-partite system, with the hidden layer $\boldsymbol{z}$ added to the visible one, $\boldsymbol{\sigma}$. The hidden layer is therefore filled with real valued gaussian $\mathcal{N}(0,1)$ neurons.
\end{Remark}
In the following two sub-sections, we will tackle the problem of finding an explicit expression for the above pressure in the thermodynamic limit in terms of the natural order parameters of the model, defined only after having introduced $n$ replicas of the system (as usual in the context of replica trick and interpolation technique calculations).\par\medskip
\begin{Definition}
	The overlap $q_{ab}$ among two replicas ($a,b=1,..,n$) of the system is defined as
	\bes\label{q}
	q_{ab}&=\frac{1}{N}\sum_{i=1}^{N} \sigma^a_i\sigma_i^b.
	\ees
	Equivalently, the overlap relative to the hidden layer is defined as:
	\bes\label{p}
	p_{ab}&=\frac{1}{K-1}\sum_{\mu=2}^{K} z^a_\mu z_\mu^b.
	\ees
	The Mattis magnetization, for a generic pattern $\xi_i^\mu$, and for a generic replica $a$ of the system, reads:
	\bes\label{mattis}
	m^a_\mu&=\frac{1}{N}\sum_{i=1}^N \xi^\mu_{i} \sigma^a_i.
	\ees
	We here introduce the matrix magnetization (also relative to the $a$-th replica):
	\bes\label{matrix}
	M^a_\mu&=\frac{1}{N^2}\sum_{i,j=1}^N \xi^\mu_{ij} \sigma^a_i \sigma^a_j.
	\ees
\end{Definition}

%: extremizing the former over the latter, en-route for the self-consistencies of the order parameters,  we will obtain the phase diagram of the model, where we will see the existence of a not-empty region for pattern's retrieval \ref{Fra1}.

\subsection{Route One: Replica Trick}
The replica trick is based on the following identity:
\bes\label{trick}
A=\lim_{N\to \infty}\frac1N\bbE \ln Z_N=\lim_{N\to\infty}\lim_{n\to0} \frac{ \ln \bbE Z^n_N}{n N}.
\ees
%Introducing the following pattern decomposition:
%\begin{equation}
%\eta^\mu_{ij}=\frac{1}{\sqrt{1+\alpha}}\xi^\mu_{ij}+\sqrt{\frac{\alpha N}{1+\alpha}} J^\mu_{ij}
%\end{equation}
Introducing the decomposition (\ref{decomposition}), the $\EX Z_N^n$ partition function becomes
\bes
\EX Z^n_N=\Big(\prod_{a=1}^{n} \sum_{\boldsymbol{\sigma}^a}\Big) \int \Big(\prod_{a=1}^{n} D z^a\Big) \EX\exp\Big(&\sqrt{\frac{\beta}{(1+\alpha)N^3}}\sum_{a=1}^n\sum_{\mu=1}^{K} \sum_{i,j=1}^{N} \xi^\mu_{ij}\sigma_i \sigma_j z_\mu +\\ &+\sqrt{\frac{\beta \alpha}{(1+\alpha)N^2}}\sum_{a=1}^n\sum_{\mu=1}^{K} \sum_{i,j=1}^{N} J^\mu_{ij}\sigma_i \sigma_j z_\mu\Big).
\ees
We now assume that only a single pattern (say $\xi^1$) is candidate for retrieval. Therefore, all patterns with $\mu\ge 2$ will contribute to the noise. We can therefore factorize the signal ($\mu=1$) from the global noise ($\mu>1$) in the partition function. Thus, the quenched average of the $n$-th power of the partition function reads
%\begin{align}
%\EX Z^n=& \left(\prod_{a=1}^{n} \sum_{\boldsymbol{\sigma}^a}\right) \EX\int \left(\prod_{a=1}^{n} D z^a\right)\exp\left(\sqrt{\frac{\beta}{(1+\alpha)N^3}} \sum_{a=1}^n\sum_{i,j=1}^{N} \xi^1_{ij} \sigma^a_i \sigma^a_j z^a_1 +\sqrt{\frac{\beta \alpha}{(1+\alpha)N^2}}\sum_{a=1}^n\sum_{i,j=1}^{N} J^1_{ij}\sigma^a_i \sigma^a_j z^a_1\right)\times\\&\times \exp\left(\sqrt{\frac{\beta}{(1+\alpha)N^3}} \sum_{a=1}^n\sum_{\mu> 1}^{K} \sum_{i,j=1}^{N} \xi^\mu_{ij}\sigma^a_i \sigma^a_j z^a_\mu + \sqrt{\frac{\beta \alpha}{(1+\alpha)N^2}}\sum_{a=1}^n\sum_{\mu>1}^{K} \sum_{i,j=1}^{N} J^\mu_{ij}\sigma^a_i \sigma^a_j z^a_\mu \right)
%\end{align}
\bes
\EX Z^n=& \Big(\prod_{a=1}^{n} \sum_{\boldsymbol{\sigma}^a}\Big) \EX \exp\Big[\frac{\beta}{2(1+\alpha)}\sum_{a=1}^n \Big(\sqrt{N} M_1^a+ \frac{\sqrt{\alpha}}{N}\sum_{i,j=1}^{N} J^1_{ij}\sigma^a_i \sigma^a_j\Big)^2\Big]\times\\\times& \int \Big(\prod_{a=1}^{n} D \{z^a\}_{\mu>1}\Big)\exp\Big(\sqrt{\frac{\beta}{(1+\alpha)N^3}}\sum_{a=1}^n\sum_{\mu> 1}^{K} \sum_{i,j=1}^{N}\left( \xi^\mu_{ij} + \sqrt{\alpha N} J^\mu_{ij}\right)\sigma^a_i \sigma^a_j z^a_\mu\Big).
\label{eq:1}
\ees
In the first line, we can simply drop out the Gaussian contribution from the signal term, since
\begin{align}{\label{app1}}
\frac{1}{N}\sum_{i,j} J^1_{ij }\sigma^a_i \sigma^a_j \sim \mathcal{O}(1),
\end{align}
for each $a=1,\dots,n$.\footnote{Recall that only terms that are linear extensive in $N$ do contribute in the exponent, as lower order terms disappear in the thermodynamic limit. The $\sqrt{N} M_1^a$ term has the correct scaling
\begin{align}
\frac{1}{N^{3/2}}\sum_{i,j} \xi^1_{ij }\sigma^a_i \sigma^a_j \sim \mathcal{O}(N^{1/2}),
\end{align}
which becomes $\mathcal{O}(N)$, given the presence of the square in Eq. (\ref{eq:1}) and for this reason it cannot be neglected: it represents the signal in the network.}
Then, the we can split the $n$-th power of the partition function as
\begin{equation}\label{eq:split}
\EX Z^n=\Big(\prod_{a=1}^{n} \sum_{\boldsymbol{\sigma}^a}\Big) Z_{\text{signal}}Z_{\text{noise}},
\end{equation}
where
\begin{equation}
\begin{split}
Z_{\text{signal}}&=\exp\Big[\frac{\beta N}{2(1+\alpha)}\sum_{a=1}^n \left(M_1^a\right)^2\Big],\\
Z_{\text{noise}}&=\int \Big(\prod_{a=1}^{n} D \{z^a\}_{\mu>1}\Big)\EX\exp\Big(\sqrt{\frac{\beta}{(1+\alpha)N^3}}\sum_{a=1}^n\sum_{\mu> 1}^{K} \sum_{i,j=1}^{N}\left( \xi^\mu_{ij} + \sqrt{\alpha N} J^\mu_{ij}\right)\sigma^a_i \sigma^a_j z^a_\mu\Big).
\end{split}
\end{equation}

First, we focus on the signal term. The matrix magnetization \eqref{matrix} measures the overlap of the product $\sigma_i \sigma_j$ in the direction specified by the matrix $\xi_{ij}$.\footnote{We omit the "upper" index in $\xi^1_{ij}$, since it plays no role in what follows.} However, the network configuration is fixed by specifying the value of the $N$ variables $\sigma_i$, while the $\xi_{ij}$ has $\sim N^2$ degrees of freedom. This means that the network configuration could not retrieve a general tensor.\footnote{It can be shown that, when the pattern $\xi_{ij}$ is not fully factorized in the product of two copies of the same vector $\xi_i$, there are no possible spin configurations $\boldsymbol{\sigma}$ giving $|M|=1$. Roughly speaking, this is due to the fact that the matrix $\xi_{ij}$ has $\mathcal{O}(N^2)$ degrees of freedom, in contrast to the solely $\mathcal{O}(N)$ of a $N$-spin network.} This issue is removed by working directly with factorized information patterns, i.e. in the form $\xi_{ij}\equiv\xi_i \xi_j$. This has an interesting consequence: the matrix magnetization factorizes in the square of the Mattis magnetization:
\begin{align}
M^a=\Big(\frac{1}{N}\sum_{i=1}^N \xi_i \sigma_i^a\Big)^2 = (m_a)^2.
\end{align}
Hence, the signal term is simply\footnote{We neglect the irrelevant term $\left(\frac{N}{2\pi}\right)^{2n}$, as it gives no contribution in Eq. (\ref{trick}).}
\bes
Z_{\text{signal}}=\int\Big(\prod_a dm_a d\hat{m}_a\Big)\exp\Big(-iN\sum_a m_a \hat{m}_a-i\sum_a \hat{m}_a \sum_i \xi_i \sigma^a_i+N\frac{\beta}{2(1+\alpha)}\sum_a m^4_a\Big),
\ees
where $\hat{m}_a$ is the conjugated momentum of $m_a$, and naturally arises from the Fourier representation of the Dirac delta
\begin{equation}
1= \int \prod_a dm_a \delta (m_a -\tfrac1N \sum_i \xi_i \sigma_i ^a).
\end{equation}
Concerning the noise term, it can be evaluated as (see App. \ref{app:noise})
\begin{equation}
\begin{split}
Z_{\text{noise}}=&\int \Big(\prod_{a,b} dq_{ab} dp_{ab} d\hat{q}_{ab} d\hat{p}_{ab}\Big)\exp\Big( -\frac{\alpha N}{2} \ln \det ( \mathbb{I}+2i \hat{\mathbb{P}} )\Big)\times\\
\times&\exp\Big(i N \sum_{a,b}q_{ab}\hat{q}_{ab}-i\sum_{i} \sum_{a,b} \hat{q}_{ab} \sigma^a_i \sigma^b_i+i\alpha N\sum_{a,b} p_{ab} \hat{p}_{ab}+ \frac{\beta \alpha^2}{2(1+\alpha)} \sum_{a,b=1}^n q^2_{ab} p_{ab}\Big).
\end{split}
\end{equation}
Again, the parameters $\hat q_{ab}$ and $\hat p_{ab}$ are the conjugate momenta of $q_{ab}$ and $p_{ab}$.
Putting together our results, we end up with the following expression:
\bes\label{final1}
\EX Z^n=&\int \Big(\prod_{a,b} dq_{ab} dp_{ab} d\hat{q}_{ab} d\hat{p}_{ab}\Big)\Big(\prod_a dm_a d\hat{m}_a\Big)\exp\Big( -\frac{\alpha N}{2} \ln \det( \mathbb{I}+2i \hat{\mathbb{P}} )\\
&+i N \sum_{a,b}q_{ab}\hat{q}_{ab}+i\alpha N\sum_{a,b} p_{ab} \hat{p}_{ab}+\frac{\beta \alpha^2}{2(1+\alpha)} \sum_{a,b=1}^n q^2_{ab} p_{ab}\\
&+iN\sum_a m_a \hat{m}_a+N\frac{\beta}{2(1+\alpha)}\sum_a m^4_a\Big)\times\\
&\times\Big(\prod_{a=1}^{n} \sum_{\boldsymbol{\sigma}^a}\Big)\exp\Big(-i\sum_{i} \sum_{a,b} \hat{q}_{ab} \sigma^a_i \sigma^b_i-i\sum_a \hat{m}_a \sum_i \xi_i \sigma^a_i\Big).
\ees
The last line in the latter equation can be written as
\bes\label{final2}
&\Big(\prod_{a=1}^{n} \sum_{\boldsymbol{\sigma}^a}\Big)\exp\Big(-i\sum_{i} \sum_{a,b} \hat{q}_{ab} \sigma^a_i \sigma^b_i-i\sum_a \hat{m}_a \sum_i \xi_i \sigma^a_i\Big)=\\
%&=\exp\left(\sum_i \ln \left(\prod_{a=1}^{n} \sum_{\sigma^a_i=\pm 1}\right)\exp\left(-i \sum_{a,b} \hat{q}_{ab} \sigma^a_i \sigma^b_i-i\sum_a \hat{m}_a \xi_i \sigma^a_i\right)\right)=\\
&=\exp\Big[N\avg{\ln \Big(\prod_{a=1}^{n} \sum_{\sigma^a=\pm 1}\Big)\exp\Big(-i \sum_{a,b} \hat{q}_{ab} \sigma^a \sigma^b-i\sum_a \hat{m}_a \xi \sigma^a\Big)}_\xi\Big].
\ees
where the average over $\boldsymbol{\xi}$ has been defined as
\begin{equation}
\avg{g(\boldsymbol{\xi})}_{\xi} \equiv\lim_{N\to\infty} \frac{1}{N} \sum_{i=1}^{N} g(\boldsymbol{\xi}_i).
\label{eq:av}
\end{equation}
Assuming the commutativity of the two limits $N\to\infty$ and $n\to0$ (following the replica trick paradigm \cite{Mingione}), we can compute the statistical pressure in the thermodynamic limit through the saddle point method, which gives
\begin{align}\label{phin}
A=\lim_{n\to 0} \frac{1}{n} \text{Extr} \,\phi,
\end{align}
where $\phi$ is the argument of the exponential in the partition function (see Eqs. (\ref{final1}) and (\ref{final2})), i.e.:
\bes
\phi&=i \sum_{a,b}q_{ab}\hat{q}_{ab}+i\alpha \sum_{a,b} p_{ab} \hat{p}_{ab}+\frac{\beta \alpha^2}{2(1+\alpha)} \sum_{a,b=1}^n q^2_{ab} p_{ab}-\frac{\alpha}{2} \ln \det( \mathbb{I}-2i \hat{\mathbb{P}} )+\\
&+i\sum_a m_a \hat{m}_a+\frac{\beta}{2(1+\alpha)}\sum_a m^4_a+\avg{\ln \sum_{\boldsymbol{\sigma}}\exp\Big(-i\sum_{a,b} \hat{q}_{ab} \sigma^a \sigma^b-i\sum_a \hat{m}_a \xi \sigma^a\Big)}_\xi.
\ees
We can drop out the conjugates momenta by imposing the saddle point conditions on $p$,$q$ and $m$ respectively, which correspondingly give
\begin{align}
\hat{p}_{ab}=\frac{i}{2} \frac{\beta \alpha}{1+\alpha} q^2_{ab}, \qquad \hat{q}_{ab}=i \frac{\beta \alpha^2}{1+\alpha} q_{ab} p_{ab}, \qquad \hat{m}_a=2i\frac{\beta}{1+\alpha}m_a^3.
\end{align}
With these conditions, we obtain a simpler form for $\phi$, namely:
\bes\label{phi}
{\phi}=&-\frac{\beta \alpha^2}{1+\alpha} \sum_{a,b}q^2_{ab} p^2_{ab}-\frac{\alpha}{2} \ln \det( \mathbb{I}+\frac{\beta \alpha}{1+\alpha} \mathbb{Q})-\frac{3}{2}\frac{\beta}{1+\alpha} \sum_a m_a^4+\\
&+\avg{\ln \sum_{\boldsymbol{\sigma}}\exp\Big(\frac{\beta \alpha^2}{1+\alpha}\sum_{a,b} q_{ab} p_{ab} \sigma^a \sigma^b-\frac{2\beta}{1+\alpha} \xi \sum_a m^3_a \sigma^a\Big)}_\xi,
\ees
where the matrix $\mathbb{Q}$ has been defined as $\mathbb{Q}_{ab}\equiv q^2_{ab}$.\\
\begin{Definition}
The replica symmetric ansatz (\emph{RS}) for this network model reads
\begin{align}\label{RS}
q_{ab}=\delta_{ab}+q(1-\delta_{ab}), \qquad p_{ab}=p_D\delta_{ab}+p(1-\delta_{ab}), \qquad m_a=m.
\end{align}
\end{Definition}
We are now able to enunciate the following proposition regarding the quenched pressure (\ref{pressure}) in the {RS} ansatz:
\par\medskip
\begin{Proposition}
The replica symmetric expression of the quenched pressure related to the model (\ref{Fra1}) reads
\bes\label{A}
A^{RS}=&\ln 2-\frac{\beta \alpha^2}{1+\alpha} (qp -q^2 p) - \frac{\alpha}{2} \ln \Big(1-\frac{\beta \alpha}{1+\alpha} (1-q^2) \Big)+ \frac{\alpha}{2}\frac{\beta \alpha}{1+\alpha}\frac{q^2}{1-\frac{\beta \alpha}{1+\alpha} (1-q^2) }+\\
&-\frac{3}{2}\frac{\beta}{1+\alpha} m^4+\int Dx \:\ln \cosh \Big( \sqrt{2\frac{\beta \alpha^2}{1+\alpha} p q}\:x + \frac{2\beta}{1+\alpha} m^3 \Big).
\ees
\end{Proposition}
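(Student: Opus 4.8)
The plan is to take the reduced replica functional $\phi$ in Eq.~(\ref{phi}), substitute the replica-symmetric ansatz (\ref{RS}), evaluate $\lim_{n\to0}\tfrac1n\phi$ term by term, and check that the sum reproduces (\ref{A}). The extremisation over $q,p,m$ is left for the subsequent analysis of the self-consistency equations, and the diagonal hidden overlap $p_D$ will be eliminated along the way, so no stationarity in those variables is imposed at this stage.

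I would first dispatch the ``polynomial'' pieces. Splitting every replica sum into its diagonal ($a=b$) and off-diagonal ($a\neq b$) parts and using $q_{aa}=1$, one gets $\sum_{a,b}q_{ab}^2 p_{ab}=n\,p_D+n(n-1)q^2p$ and $\sum_a m_a^4=n\,m^4$, so after dividing by $n$ and sending $n\to0$ the first line of $\phi$ contributes $-\tfrac{\beta\alpha^2}{1+\alpha}(p_D-q^2p)$ and the magnetisation term contributes $-\tfrac32\tfrac{\beta}{1+\alpha}m^4$, the latter already matching (\ref{A}). Next is the spectral term: under (\ref{RS}) the matrix $\mathbb{Q}_{ab}=q_{ab}^2$ equals $(1-q^2)\mathbb{I}+q^2\,\mathbf{1}\mathbf{1}^{\top}$, so $\mathbb{I}-\tfrac{\beta\alpha}{1+\alpha}\mathbb{Q}$ (sign fixed to match the convention of (\ref{A})) has an $(n-1)$-fold degenerate ``transverse'' eigenvalue $1-\tfrac{\beta\alpha}{1+\alpha}(1-q^2)$ and one ``longitudinal'' eigenvalue $1-\tfrac{\beta\alpha}{1+\alpha}(1-q^2)-n\tfrac{\beta\alpha}{1+\alpha}q^2$. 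Writing $\ln\det$ as the multiplicity-weighted sum of the logarithms of these eigenvalues and expanding to first order in $n$ gives, after dividing by $n$, exactly $-\tfrac{\alpha}{2}\ln\!\big(1-\tfrac{\beta\alpha}{1+\alpha}(1-q^2)\big)+\tfrac{\alpha}{2}\tfrac{\beta\alpha}{1+\alpha}\tfrac{q^2}{1-\frac{\beta\alpha}{1+\alpha}(1-q^2)}$, i.e.\ the third and fourth terms of (\ref{A}).

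Finally, for the effective single-site term $\avg{\ln\sum_{\boldsymbol\sigma}\exp\big(\tfrac{\beta\alpha^2}{1+\alpha}\sum_{a,b}q_{ab}p_{ab}\sigma^a\sigma^b-\tfrac{2\beta}{1+\alpha}\xi\sum_a m_a^3\sigma^a\big)}_\xi$, the RS coupling reduces to $\sum_{a,b}q_{ab}p_{ab}\sigma^a\sigma^b=n(p_D-qp)+qp\big(\sum_a\sigma^a\big)^2$; the additive constant $n(p_D-qp)$ pulls out of the spin trace and, combined with the $-\tfrac{\beta\alpha^2}{1+\alpha}p_D$ produced earlier, leaves precisely $-\tfrac{\beta\alpha^2}{1+\alpha}(qp-q^2p)$ with the spurious $p_D$ cancelled. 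Linearising $\big(\sum_a\sigma^a\big)^2$ by a single Hubbard--Stratonovich Gaussian $x\sim\mathcal N(0,1)$ (valid since $pq\ge0$ in the retrieval regime), the spin sum factorises over replicas into $\big(2\cosh(\sqrt{2\tfrac{\beta\alpha^2}{1+\alpha}pq}\,x-\tfrac{2\beta}{1+\alpha}\xi m^3)\big)^n$; using $\xi=\pm1$ together with the evenness of $\cosh$ and the $x\to-x$ symmetry to drop $\xi$, and then expanding $\tfrac1n\ln(\cdot)^n=\tfrac1n\ln\!\big(1+n\int Dx\,\ln(2\cosh(\cdot))+O(n^2)\big)$, one recovers $\ln2+\int Dx\,\ln\cosh\!\big(\sqrt{2\tfrac{\beta\alpha^2}{1+\alpha}pq}\,x+\tfrac{2\beta}{1+\alpha}m^3\big)$. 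Adding all contributions yields (\ref{A}).

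The steps I expect to be delicate are (i) the analytic continuation to $n\to0$ in the $\ln\det$ term, where one must correctly isolate the $O(n)$ coefficient from the two distinct eigenvalue multiplicities of the RS matrix — this is the only place where the ``replica limit'' genuinely operates — and (ii) the bookkeeping of the diagonal hidden overlap $p_D$ and of the normalisation and sign of the Hubbard--Stratonovich field, so that $p_D$ cancels cleanly and the argument of $\cosh$ comes out as $\sqrt{2\tfrac{\beta\alpha^2}{1+\alpha}pq}\,x+\tfrac{2\beta}{1+\alpha}m^3$; everything else is routine Gaussian algebra.
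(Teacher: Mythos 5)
Your proposal is correct and follows essentially the same route as the paper's own computation in App.~\ref{app:RS}: term-by-term insertion of the RS ansatz into $\phi$, extraction of the $O(n)$ coefficient (including the two-eigenvalue treatment of the $\ln\det$ term), Hubbard--Stratonovich linearisation of the single-site trace, and cancellation of $p_D$ between the quadratic and entropic pieces. You also implicitly fix the same typographical slips the paper itself works around (the spurious square on $p_{ab}$ and the sign inside the determinant in Eq.~\eqref{phi}), so no gap remains.
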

\begin{proof}
The details of the RS ansatz computations are reported in App. \ref{app:RS}.
\end{proof}

\subsection{Route Two: Interpolation method}
We now proceed to check the validity of the replica trick computation with an alternative route, i.e. the Guerra's interpolation method. Given the expression of $Z_N$ in Eq. \eqref{Z}, and substituting the explicit form of $\eta$ (according to Def. \ref{etas}) in terms of signal and noise, the statistical pressure in the thermodynamic limit reads
\bes
A=\lim_{N\to \infty}\frac{1}{N}\bbE \ln \sum_{\sigma}\int Dz\:\exp\Big(\sqrt{\frac{\beta}{(1+\alpha)N^{3}}}&\sum_{\mu=1}^K\sum_{i,j=1}^N \xi^\mu_{ij} \sigma_i\sigma_jz_\mu+\\
&+\sqrt{\frac{\beta \alpha}{(1+\alpha)N^{2}}}\sum_{\mu=1}^K\sum_{i,j=1}^N J^\mu_{ij} \sigma_i\sigma_jz_\mu \Big).
\ees
Again, we isolate the signal ($\mu=1$) from the noise ($\mu>1$), always neglecting the irrelevant term because of Eq. (\ref{app1}). Thus
%Now, in order to proceed with our calculations, we split the sum over $\mu$ to consider the different statistical nature of $\xi^1$ and $\xi^2\cdots \xi^K$, furthermore we will marginalize over $z_1$, obtaining
\bes
A=\lim_{N\to \infty}&\frac{1}{N}\bbE \ln \sum_{\sigma}\int D z\exp\Big(\sqrt{\frac{\beta}{(1+\alpha)N^{3}}}\sum_{\mu>1}^K\sum_{i,j=1}^N \xi^\mu_{i} \xi^\mu_{j} \sigma_i\sigma_jz_\mu+\\
&+\sqrt{\frac{\beta \alpha}{(1+\alpha)N^{2}}}\sum_{\mu>1}^K\sum_{i,j=1}^N J^\mu_{ij} \sigma_i\sigma_jz_\mu +\frac{\beta N}{2(1+\alpha)}\big(\frac{1}{N}\sum_{i,j=1}^N \xi^1_i  \sigma_i\big)^4\Big).
\ees
Notice that we already adopted the signal factorization $\xi_{ij}=\xi_i \xi_j$, which allows us to directly express everything in terms of the Mattis magnetization $m_1$ associated to the retrieved pattern $\mu=1$.
\par
%The squared quantity is very important as it carries the signal and it is useful to introduce the following
%\newline
%\textbf{Dovreste forse convenire dove definire tutto -idealmente prima delle sottosezioni con i due metodi, dove si danno le definizioni iniziali- e forse mettere  li anche quelle delle M e degli overlaps, vedete voi...}
%\begin{Definition} The natural magnetization order parameter for the model coded by the cost function (\ref{Fra1}) is
%\beq
%m_\mu(\sigma|\xi)\equiv \frac{1}{N^2}\sum_{i,j=1}^N \xi^\mu_{ij} \sigma_i\sigma_j
%\eeq
%\end{Definition}
We are now ready to set up the interpolation strategy. We introduce an interpolating parameter $t \in (0,1)$ such that (in its extrema) it compares the original model (recovered for $t=1$) and a {\em simpler} model at $t=0$. Hence we introduce the next\medskip
\begin{Definition}
The Guerra's interpolating pressure for the MDAM coded by the cost function \eqref{Fra1} reads as
\bes\label{eq:intA}
\calA(t)=\lim_{N\to \infty}&\frac{1}{N}\bbE \ln \sum_{\sigma}\int D z\exp\Big(\sqrt{t}\sqrt{\frac{\beta}{(1+\alpha)N^{3}}}\sum_{\mu=1}^K\sum_{i,j=1}^N \xi^\mu_{ij} \sigma_i\sigma_jz_\mu+\\
&+\sqrt{t}\sqrt{\frac{\beta \alpha}{(1+\alpha)N^{2}}}\sum_{\mu=1}^K\sum_{i,j=1}^N J ^\mu_{ij} \sigma_i\sigma_jz_\mu +t\frac{\beta N}{2(1+\alpha)}m_1^4+\sqrt{1-t} \calW +(1-t)\calD\Big),
\ees
where $\calW$ and $\calD$ are defined as
\begin{align}
\calW=&\sqrt{\frac{\beta}{1+\alpha}} C_1\sum_iJ_i\sigma_i+\sqrt{\frac{\beta}{1+\alpha}} C_2\sum_\mu J_\mu z_\mu,\\
\calD=&C_3\frac{\beta}{1+\alpha}\sum_\mu \frac{z^2_\mu}{2}+C_4 \frac{\beta N}{(1+\alpha)} m_1,
\end{align}
and $C_1,\ ...,\ C_4$ are constants whose explicit values will be set later, see Eq. \eqref{stamberga}.
\newline
The \emph{interpolating variables} $J_i$ and $J_\mu$ are, respectively, $N$-component and $K$-component vectors of {i.i.d.} $\mathcal{N}(0,1)$ variables. Therefore, the expectation $\EX$ is now extended to include these new degrees of freedom.
\end{Definition}
\begin{Proposition}
The quenched pressure related to the model \eqref{Fra1} can thus be recovered using the Fundamental Theorem of Calculus:
\beq
A=\calA(t=1)=\calA(t=0)+\int_0^1 dt\, \partial_t \calA(t).
\label{eq:sumrule}
\eeq
\end{Proposition}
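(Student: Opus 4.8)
The statement is the Fundamental Theorem of Calculus applied to the one-parameter family $\{\calA(t)\}_{t\in[0,1]}$, so the plan is to settle two things separately: first, that the $t=1$ endpoint of the interpolation reproduces the true quenched pressure $A$; second, that $t\mapsto\calA(t)$ is regular enough on $[0,1]$ for the integral representation to be legitimate.

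For the endpoint, I would simply set $t=1$ in \eqref{eq:intA}: the factor $\sqrt{1-t}$ in front of $\calW$ and the factor $(1-t)$ in front of $\calD$ both vanish, while $\sqrt{t}=t=1$, so the interpolating Boltzmann factor collapses onto that of the genuine model with the $\mu=1$ contribution already written through the Mattis magnetization $m_1$ — i.e. exactly the expression for $A$ displayed right before the Definition of $\calA(t)$. Hence $\calA(1)=A$ with no computation.

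For the regularity, I would first stay at finite $N$ and write $\calA_N(t)$ for the quantity inside the $N\to\infty$ limit in \eqref{eq:intA}. At fixed $N$ the inner partition function is a strictly positive, finite sum over $\boldsymbol{\sigma}$ of Gaussian integrals over $\boldsymbol{z}$ of exponentials whose exponents depend on $t$ only through the smooth building blocks $\sqrt{t}$, $t$, $\sqrt{1-t}$, $1-t$; it is therefore $C^\infty$ on $(0,1)$ and continuous on $[0,1]$, and the finitely many binary sums and Gaussian integrals defining $\EX$ preserve both properties. The derivative $\partial_t\calA_N$ can only blow up at the two endpoints, through $\partial_t\sqrt{t}=1/(2\sqrt{t})$ and $\partial_t\sqrt{1-t}=-1/(2\sqrt{1-t})$, and these are integrable on $[0,1]$; so $\calA_N\in C([0,1])$ with $\partial_t\calA_N\in L^1(0,1)$ and the Fundamental Theorem of Calculus gives $\calA_N(1)=\calA_N(0)+\int_0^1 dt\,\partial_t\calA_N(t)$ at every finite $N$.

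What is left — and what I expect to be the real obstacle — is exchanging $\lim_{N\to\infty}$ with $\int_0^1 dt$. Pointwise convergence $\calA_N(t)\to\calA(t)$ is true by the definition of the interpolating pressure, so by dominated convergence it suffices to bound $|\partial_t\calA_N(t)|$, uniformly in $N$, by a fixed $L^1(0,1)$ function. I would obtain such a bound by differentiating in $t$ explicitly and using Gaussian integration by parts (Stein's lemma) to rewrite the derivatives of the $J$- and $z$-linear pieces as Gibbs expectations of the order parameters, then bounding the result with the a priori estimates $|m_1|\le1$, $|q_{ab}|\le1$ and the $\mathcal{O}(1)$ control of $\frac1N\sum_\mu z_\mu^2$ and of the log-det term, which altogether give a bound of the form $\mathrm{const}\cdot(1+1/\sqrt{t}+1/\sqrt{1-t})$ independent of $N$. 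At the physicists' level of rigour customary in Guerra's scheme one simply differentiates under the thermodynamic-limit sign, reaching the same sum rule directly.
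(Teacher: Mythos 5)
Your proposal is correct and follows essentially the same route as the paper, which treats the sum rule as an immediate application of the Fundamental Theorem of Calculus once one observes that at $t=1$ the $\mathcal{W}$ and $\mathcal{D}$ terms vanish and the interpolating pressure reduces to the quenched pressure $A$ written just before the definition of the interpolation. Your extra discussion of finite-$N$ smoothness, the integrable endpoint singularities of $\partial_t$, and the exchange of the thermodynamic limit with the $t$-integral goes beyond the paper's (physics-level) treatment, but it is consistent with it and does not change the argument.
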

Following the scheme used in \cite{Albert2,Guerra,Huang}, we can evaluate separately $\partial_t\calA$ and $\calA(0)$. Tackling the $t$-streaming first and keeping as order parameters those defined in equations (\ref{q}), (\ref{p}) and (\ref{matrix}),
%\bes
%q_{ab}&=\frac{1}{N}\sum_i \sigma^a_i\sigma_i^b,\\
%p_{ab}&=\frac{1}{K-1}\sum_i z^a_\mu z_\mu^b,\\
%m_\mu&= \frac{1}{N^2}\sum_{i,j=1}^N \xi^\mu_{ij} \sigma_i\sigma_j,
%\ees
we obtain
%\bes
%\partial_t\calA=\frac{\beta}{2(1+\alpha)}\bbE \big[& \alpha(\alpha+\frac{1}{N})(\mv{p_{11}}-\mv{p_{12}q_{12}^2}) -C_1^2 +C_1^2\mv{q_{12}} -\alpha C_2^2 \mv{p_{11}} +\alpha C_2^2 \mv{p_{12}} +\\
%& -\alpha C_3\mv{p_{11}} + \mv{m_1^2}- 2C_4\mv{m_1}\big].
%\ees
%We can already disregard the term proportional to $N^{-1}$ since it only plays a role at finite $N$, and we want to take the thermodynamic limit ($N\to\infty$), thus we must only deal with
\bes\label{LaDue}
\partial_t\calA(t)=\frac{\beta}{2(1+\alpha)}\bbE \big[& \alpha^2\mv{p_{11}}-\alpha^2\mv{p_{12}q_{12}^2} -C_1^2 +C_1^2\mv{q_{12}} -\alpha C_2^2 \mv{p_{11}} +\alpha C_2^2 \mv{p_{12}} +\\
& -\alpha C_3\mv{p_{11}} + \mv{m_1^4}- 2C_4\mv{m_1}\big],
\ees
where we use the standard notation $\avg{.}$ for the Boltzmann average.\footnote{Notice that the Boltzmann average has a functional dependence from the interpolating parameter $t$, as every thermodynamic observable is computed from the general interpolating pressure (\ref{eq:intA}).} The terms involving the Mattis magnetizations for $\mu>1$ do not appear in the streaming equation since their contribution is subleading in the thermodynamic limit.\\
The expected Boltzmann averages $\mv{q_{12}} $, $\mv{p_{12}}$, $\mv{p_{11}}$ and $\mv{m_1}$ are difficult to compute, but recall that we are interested in the replica symmetric evaluation of the quenched free energy (and, thus, also of the replica symmetric expression of all the order parameters). Introducing the fluctuations of the order parameter (centered around their quenched mean values $q$, $p$ and $m$, see eq. \ref{RS})
\begin{align}\label{fluct}
\Delta_q&=q_{12}-q,\\
\Delta_p&=p_{12}-p,\\
\Delta_m&=m_{1}-m,
\end{align}
and recalling that, in the $RS$ approximation, they vanish in the thermodynamic, we can rewrite the interaction terms in eq. (\ref{LaDue}) as
\bes
\mv{p_{12}q_{12}^2} &= -2pq^2 + q^2\mv{p_{12}} + 2pq\mv{q_{12}},\\% + 2q\mv{\Delta_p\Delta_q}+p\mv{\Delta_q^2}+\mv{\Delta_p\Delta_q^2}\\
\mv{m_1^4}&=- 3 m^4 + 4 m^3 \mv{m_1}.
\ees
%Allowing for a powerful simplification in the replica symmetric regime where fluctuations are vanishing, thus the interaction terms become
%\bes
%\mv{p_{12}q_{12}^2} &= -2pq^2 + q^2\mv{p_{12}} + 2pq\mv{q_{12}}\\
%\mv{m_1^4}&=-3 m^4+ 4 m^3\mv{m_1}
%\ees
By substitution inside the streaming equation we obtain
\bes
\partial_t\calA=\frac{\beta}{2(1+\alpha)}\bbE \big[& \alpha^2\mv{p_{11}}+2\alpha^2 pq^2 - \alpha^2 q^2\mv{p_{12}} - 2\alpha^2 pq\mv{q_{12}} -C_1^2 +C_1^2\mv{q_{12}} -\alpha C_2^2 \mv{p_{11}}  +\\
&+\alpha C_2^2 \mv{p_{12}} -\alpha C_3\mv{p_{11}} - 3 m^4 + 4 m^3 \mv{m_1}- 2C_4\mv{m_1}\big].
\ees
Recall that we have four free parameters: $C_1,\:C_2,\:C_3$ and $C_4$. They can be chosen \emph{a fortiori} in order to eliminate the expected Boltzmann averages $\mv{q_{12}} $, $\mv{p_{12}}$, $\mv{p_{11}}$ and $\mv{m_1}$ in favour of their replica-symmetric expectations in the thermodynamic limit.	With  this idea in mind, we rewrite the latter equation as
\bes
\partial_t\calA=\frac{\beta}{2(1+\alpha)}\bbE \big[& (\alpha^2  -\alpha C_2^2  -\alpha C_3 )\mv{p_{11}}  +(\alpha C_2^2- \alpha^2 q^2)\mv{p_{12}} +(C_1^2  - 2\alpha^2 pq)\mv{q_{12}} +\\
&+ (4 m^3 - 2C_4)\mv{m_1}-C_1^2  + 2\alpha^2 pq^2 - 3 m^4 \big].
\ees
It is now clear that, with the following choice:
\bes\label{stamberga}
C_1=\sqrt{2\alpha^2 p q}, \qquad C_2 =\sqrt{\alpha}q, \qquad C_3=\alpha(1-q^2), \qquad C_4=2 m^3,
\ees
we can achieve our goal and simplify the streaming term further, obtaining
\bes\label{stream}
\partial_t\calA=-\frac{\beta}{2(1+\alpha)}\big[2\alpha^2 pq(1-q) + 3 m^4 \big].
\ees
Now we are left with the one body term:
\bes
\calA(0)=\lim_{N\to \infty} \frac{1}{N}\bbE \ln \sum_{\sigma}\int D z\exp\Big( &\sqrt{\frac{\beta}{1+\alpha}} C_1\sum_iJ_i\sigma_i+\sqrt{\frac{\beta}{1+\alpha}} C_2\sum_\mu J_\mu z_\mu+\\
&+C_3\frac{\beta}{1+\alpha}\sum_\mu \frac{z^2_\mu}{2}+C_4 \frac{\beta }{1+\alpha} \sum_{i=1}^N \xi^1_{i} \sigma_i\Big).
\ees
It can be easily computed, returning
\bes\label{A0}
\calA(0)=&- \frac{\alpha}{2} \ln \Big(1-\frac{\beta \alpha}{1+\alpha} (1-q^2) \Big)+ \frac{\alpha}{2}\frac{\beta \alpha}{1+\alpha}\frac{q^2}{1-\frac{\beta \alpha}{1+\alpha} (1-q^2) }\\&+\int Dx \:\ln \cosh \Big( \sqrt{2\frac{\beta \alpha^2}{1+\alpha} p q}\:x + \frac{2\beta}{1+\alpha} m^3 \Big).
\ees
Combining equations (\ref{stream}) and (\ref{A0}), after some rearrangements we get the same result already derived through the replica trick for the {RS} pressure (\ref{A}).

\subsection{Phase Diagram}
Before moving on, we rewrite here the \emph{RS} pressure for the reader's convenience:
\bes
A^{RS}=&\ln 2-\frac{\beta \alpha^2}{1+\alpha} (qp -q^2 p) - \frac{\alpha}{2} \ln \Big(1-\frac{\beta \alpha}{1+\alpha} (1-q^2) \Big)+ \frac{\alpha}{2}\frac{\beta \alpha}{1+\alpha}\frac{q^2}{1-\frac{\beta \alpha}{1+\alpha} (1-q^2) }+\\
&-\frac{3}{2}\frac{\beta}{1+\alpha} m^4+\int Dx \:\ln \cosh \Big( \sqrt{2\frac{\beta \alpha^2}{1+\alpha} p q}\:x + \frac{2\beta}{1+\alpha} m^3 \Big).
\ees
Extremizing the statistical pressure with respect to the parameters $q,p$ and $m$, we end up with the self-consistency equations
%g^2&=\frac{2M^2 \beta}{1+\alpha}\\
\begin{equation}\label{eq:selfc}
\begin{split}
m  &=  \int Dx\,\tanh\Big(\sqrt{2\frac{ \beta \alpha^2}{1+\alpha} p q }\:x + \frac{ 2 \beta }{1+\alpha}\:  m^3\Big),\\
q  &=  \int Dx\,\tanh^2\Big(\sqrt{2\frac{ \beta \alpha^2}{1+\alpha} p q }\:x + \frac{ 2 \beta }{1+\alpha}\:  m^3\Big),\\
p&=\frac{\frac{\beta\alpha}{1+\alpha}  q^2  }{\Big(1- \frac{\beta\alpha}{1+\alpha} \left(1-q^2\right) \Big)^2}.
\end{split}
\end{equation}
We numerically solve these equations and paint the phase diagram reported in Fig.\ref{phdiag} (left panel), made by three different phases: the Retrieval (R), characterized by non-zero values of the two order parameters $m$ and $q$, i.e. $m\neq0$ and $q\neq0$; the Spin Glass phase (SG), where $m=0$, $q\neq0$, and the Ergodic phase (E), with $m=q=0$. In the Retrieval region pure states are always global minima for the free energy. Since mixture states (which are present as well as pure ones) are not global minima of the free energy, we refer to the R phase as a "pure retrieval" phase. We argue that this is due to the decomposition \eqref{decomposition}.\\
Furthermore we performed Monte Carlo simulations in order to check out our assumptions (e.g. the RS ansatz). In particular, we focused on the E-R critical line (see the phase diagram, fig. \ref{phdiag} (left)). We performed a finite size scaling analysis (see figure \ref{phdiag}, (right)), which led to the critical temperatures for different values of $\alpha$ depicted in the bottom right panel, figure \ref{phdiag}, (right): numerical outcomes are in excellent agreement with the theoretical predictions.
%\end{Theorem}
%\begin{figure}[H]
%\centering
%\includegraphics[scale=0.3]{phase2.pdf}\includegraphics[scale=0.385]{confronto.pdf}
%\label{phdiag}
%\caption{(left) The phase diagram of the model obtained by solving the self-consistencies, see Eqs. \eqref{eq:selfc}. We highlight three region: a pure ergodic behavior (E), a spin glass phase (SG) and a retrieval one (R). (right) The phase diagram of the model obtained by Monte Carlo simulations at different sizes $N$, as reported in the legend of the figure: the finite size scaling depicted suggests the simulations to approach the theoretical line as the system size diverges (as expected).}%Dotted line is the Spin Glass Critical line, solid line is Ferromagnetic critical line: \textbf{dentro la fase di retrieval la linea di spin glass andrebbe levata perche' e' molto misleading}
%\end{figure}
%\begin{figure}[H]
%	\centering
%	\includegraphics[scale=0.75]{confronto.pdf}
%	\label{MCMC}
%	\caption{}
%\end{figure}

\begin{figure}[h!]
	\centering
	\begin{minipage}{.5\textwidth}
		\centering
		\includegraphics[width=\textwidth]{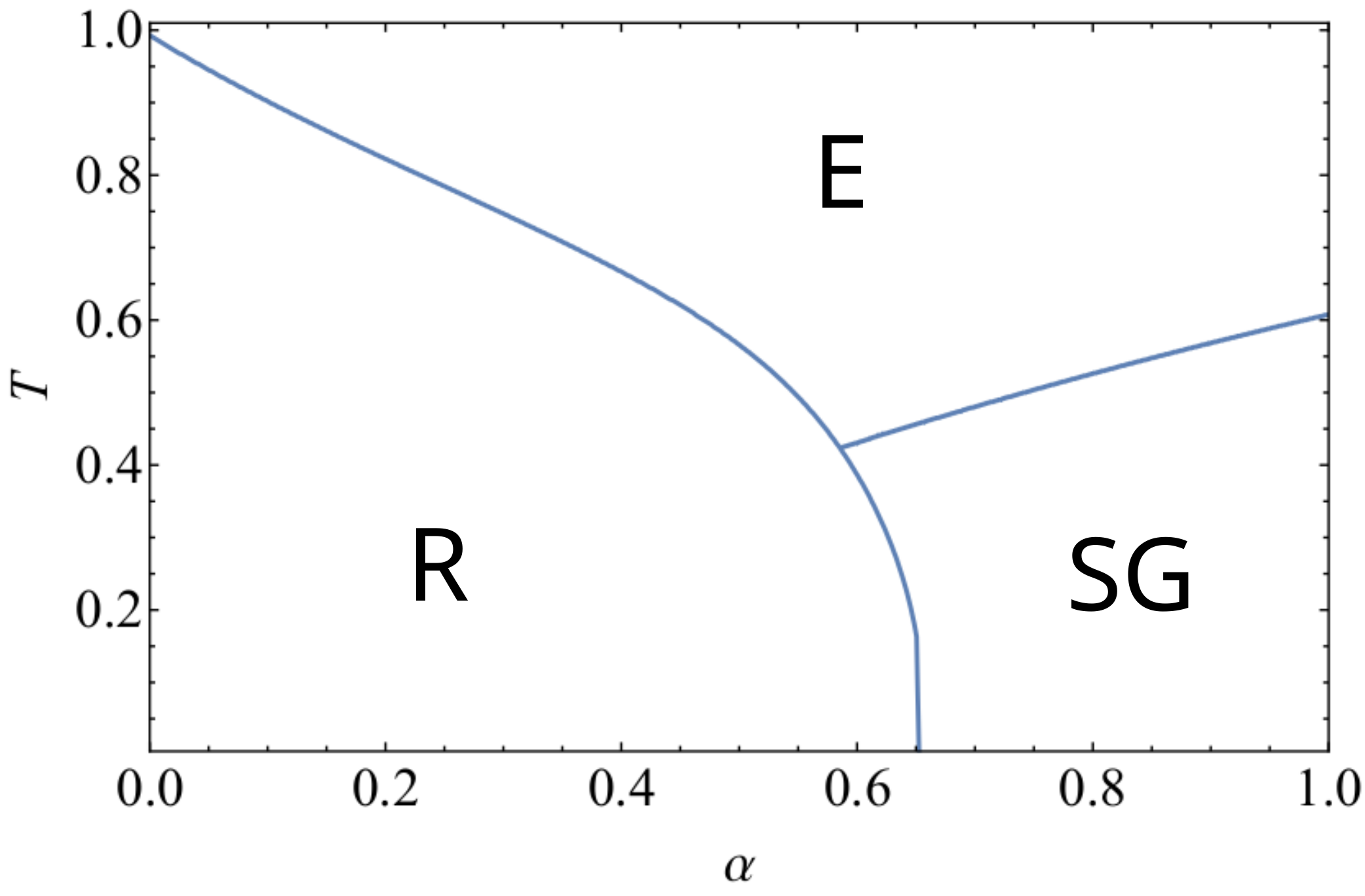}
			\vspace{0.05cm}
	\end{minipage}%
	\begin{minipage}{0.5\textwidth}
		\centering
		\includegraphics[width=\textwidth]{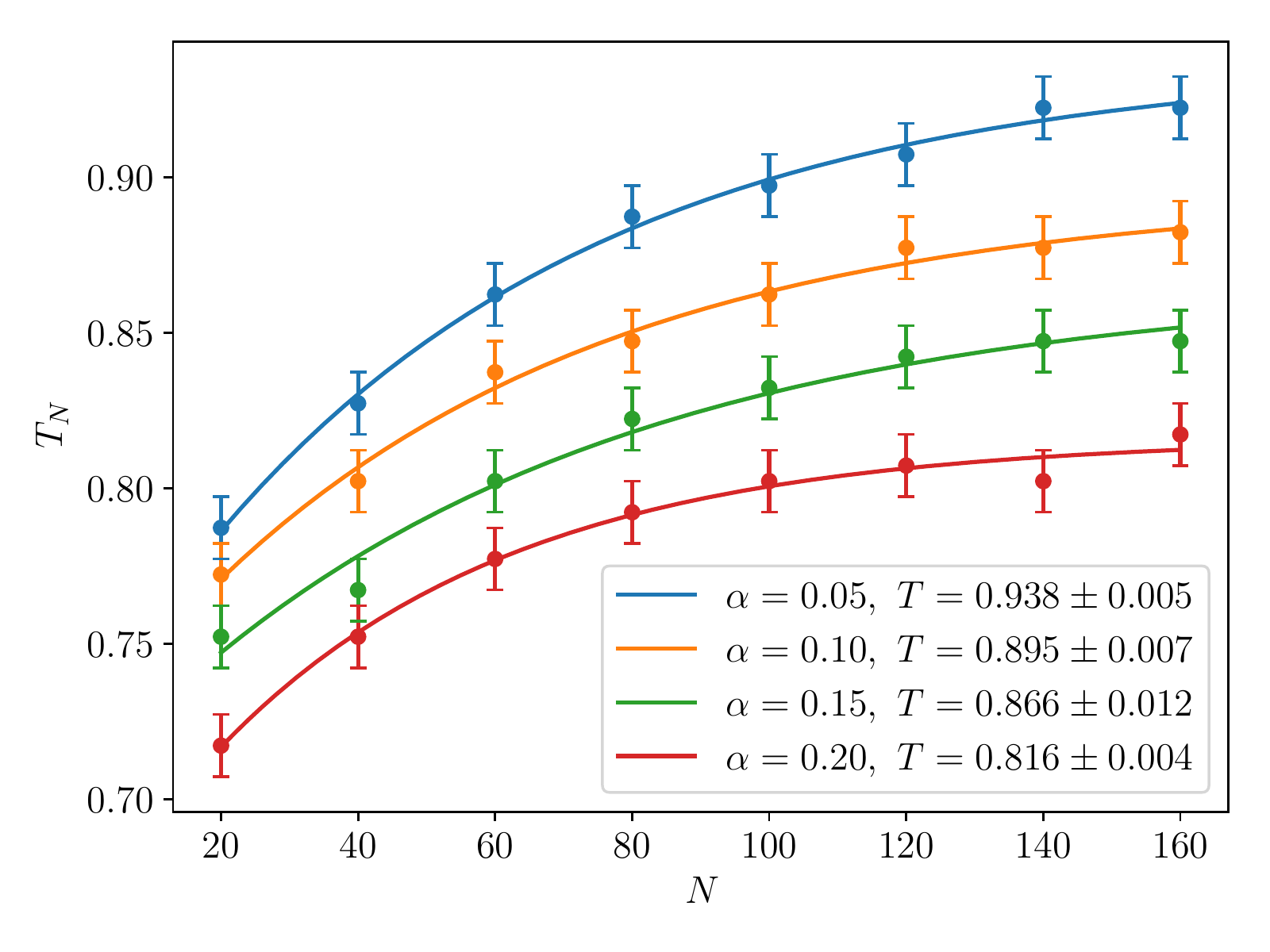}
		\vspace{0.0cm}
	\end{minipage}
\caption{(left) The phase diagram of the MDAM obtained by solving the self-consistencies, see Eqs. \eqref{eq:selfc}. We highlight three regions: a pure ergodic one (E), a spin glass phase (SG) and a retrieval one (R).  (right) Different critical lines $T_N$ depicting the E-R transition relative to different loads of the network, i.e. for $\alpha=0.05, 0.10, 0.15$ and $0.20$, as function of $N$. For each load $\alpha$ we performed Monte Carlo simulations for different sizes $N$, ranging from $N=20$ to $N=160$, with leaps of $\Delta N=20$, thus obtaining the different points interpolated in the right panel of the figure. The critical temperatures $T$ (for different values of $\alpha$) in the bottom right legend of the figure are so obtained; they are consistent with our theoretical results. 
	%The phase diagram of the model obtained by Monte Carlo simulations at different sizes $N$, as reported in the legend of the figure: the finite size scaling depicted suggests the simulations to approach the theoretical line as the system size diverges (as expected).
}
\label{phdiag}
\end{figure}

\section{Conclusions}

Along the lines of our recent research \cite{sej,Albert1}, in this paper we extensively relied upon tools typical of the statistical mechanics of spin-glasses to quantify the high pattern recognition capacity of, possibly, the simplest neural network falling in the class of {\em dense associative memories}. The latter were recently proposed by Hopfield and Krotov \cite{HopfieldKrotov_DAM}\cite{HopfieldKrotov_DL}  as a candidate benchmark to inspect for possibly explaining (part of) the impressive skills that artificial neural architectures experience nowadays.
\newline
In particular we have shown that such a network, equipped with {\em solely} a linear storage of patterns $K$ -in the volume $N$-  but where patterns are split in a $\mathcal{O}(1)$ signal term and an $\mathcal{O}(\sqrt{N})$ noisy term, is able to extensively de-noise the perceived inputs such as to accomplish pattern recognition despite the prohibitive level of noise: this is ultimately due to the dense connections where redundant representations of patterns are possible \cite{sej}. The critical capacity in this regime of operation -at least at the replica symmetric level of description- is quite huge, resulting in $\alpha_c (\beta \to \infty) \sim 0.65$ (and we checked numerically that the replica symmetric assumption is tolerated as shown by extensive Monte Carlo runs). In particular, at present -to our knowledge- this is the  largest critical capacity for networks presenting this high pattern recognition skill, (the network of \cite{sej} has to respect $\alpha_c  \leq 0.5$).
\newline 
Furthermore the retrieval region is characterized by the fact that pure states are always global minima for the free energy. We conjecture this is a consequence of the choice of the decomposition \eqref{decomposition}.
\newline
Finally, with the aim of promoting cross-fertilization among the two disciplines of Machine Learning and Disordered Statistical Mechanics, we collected the outlined results by using two among the most used methods to deal with spin-glasses, namely the replica trick \cite{Coolen} and the interpolation method \cite{Guerra}, discussing both of them in great detail.

\appendix

\section{Replica trick computations: details}
In this Appendix, we report some details on the replica trick computation.
\subsection{Evaluation of the noise term}\label{app:noise}
In this Section, we evaluate the noise term in the splitted partition function \eqref{eq:split}, which we report here for convenience:
\begin{equation}
Z_{\text{noise}}=\int \Big(\prod_{a=1}^{n} D \{z^a\}_{\mu>1}\Big)\EX\exp\Big(\sqrt{\frac{\beta}{(1+\alpha)N^3}}\sum_{a=1}^n\sum_{\mu> 1}^{K} \sum_{i,j=1}^{N}\left( \xi^\mu_{ij} + \sqrt{\alpha N} J^\mu_{ij}\right)\sigma^a_i \sigma^a_j z^a_\mu\Big).
\end{equation}
Because of the independence of the signal and noise term in the pattern decomposition \eqref{decomposition}, we can perform the averages separately. We start with performing the average over the $\xi$'s, which leads to
\begin{align}
\exp\Big(\sum_{i,j,\mu>1} \ln \cosh\big(\sqrt{\frac{\beta}{(1+\alpha)N^3}} \sum_{a} \sigma^a_i \sigma^a_j z^a_\mu\big) \Big).
\end{align}
In the large $N$ limit, we can expand in powers of $1/N$ the $\ln\cosh$ function, keeping only the leading contribution (as all higher order corrections vanish in the thermodynamic limit). Then, we are left with
\begin{align}
\exp\Big(\frac{\beta}{2(1+\alpha)N^3}\sum_{i,j,\mu>1}  \big(\sum_{a} \sigma^a_i \sigma^a_j z^a_\mu\big)^2 \Big).
\end{align}
However, the exponent in the latter equation is of order $\mathcal{O}(1)$, thus it is a subleading contribution w.r.t. to the Gaussian part of the noise term. Therefore, it can been neglected in the large $N$ limit. The result is
%We end up with
%\begin{align}
%\EX Z^n \simeq \left(\prod_{a=1}^{n} \sum_{\boldsymbol{\sigma}^a}\right)  Z_{signal} Z_{noise},
%\end{align}
%where
%\begin{align}
%Z_{signal}=\exp\left[\frac{\beta N}{2(1+\alpha)}\sum_{a=1}^n \left(M_1^a\right)^2\right]
%\end{align}
%and
\begin{align}
Z_{\text{noise}}=\int DJ \int \Big(\prod_{a=1}^{n} D \{z^a\}_{\mu>1}\Big)\exp\Big(\sqrt{\frac{\beta \alpha}{(1+\alpha)N^2}}\sum_{a=1}^n\sum_{\mu>1}^{K} \sum_{i,j=1}^{N} J^\mu_{ij}\sigma^a_i \sigma^a_j z^a_\mu \Big).
\end{align}
Now, we can directly average over the $J$ variables, obtaining
\bes
Z_{noise}&=\int \Big(\prod_{a=1}^{n} D \{z^a\}_{\mu>1}\Big)\exp\Big(\frac{\beta \alpha}{2(1+\alpha)N^2}\sum_{\mu>1}^{K} \sum_{i,j=1}^{N}\big(\sum_{a=1}^n \sigma^a_i \sigma^a_j z^a_\mu \big)^2\Big)=\\
&=\int \Big(\prod_{a=1}^{n} D \{z^a\}_{\mu>1}\Big)\exp\Big(\frac{\beta \alpha}{2(1+\alpha)N^2}\sum_{\mu>1}^{K} \sum_{i,j=1}^{N}\sum_{a,b=1}^n \sigma^a_i \sigma^b_i \sigma^a_j \sigma^b_j z^a_\mu z^b_\mu\Big).
\ees
In the last line, the dependence on the order parameters $q_{ab}$ and $p_{ab}$ is clear. Hence, we can now introduce a product of delta functions by using
\begin{align}
1=\int \Big(\prod_{a,b} dq_{ab} dp_{ab} \:\delta\big( q_{ab} - \frac{1}{N} \sum_{i=1}^N \sigma^a_i \sigma^b_i \big)\: \delta\big( p_{ab} - \frac{1}{K-1} \sum_{\mu=2}^K z^a_\mu z^b_\mu \big)\Big).
\end{align}
After this manipulation, we get
\begin{align*}
Z_{\text{noise}}=& \int \Big(\prod_{a=1}^{n} D \{z^a\}_{\mu>1}\Big)\Big(\prod_{a,b} dq_{ab} dp_{ab} \:\delta\big( q_{ab} - \frac{1}{N} \sum_{i=1}^N \sigma^a_i \sigma^b_i \big) \:\delta\big( p_{ab} - \frac{1}{K-1} \sum_{\mu=2}^K z^a_\mu z^b_\mu \big)\Big)\times\\
\times&\exp\Big(N \frac{\beta \alpha^2}{2(1+\alpha)} \sum_{a,b=1}^n q^2_{ab} p_{ab}\Big).
\numberthis
\end{align*}
At this point, we use the Fourier representation of the Dirac delta:
\begin{align}
\delta\big(q_{ab} - \frac{1}{N}\sum_i \sigma^a_i \sigma^b_i\big)=\frac{N}{2\pi}\int d \hat{q}_{ab} \exp\Big( i N \hat{q}_{ab} \big(q_{ab} - \frac{1}{N} \sum_i \sigma^a_i \sigma^b_i \big)\Big),
\label{eq:fourier}
\end{align}
and similarly for $p_{ab}$. Then, the noise term now reads:\footnote{Again, we neglect the irrelevant factors $\left(\frac{N}{2\pi}\right)^{n^2}\left(\frac{\alpha N}{2\pi}\right)^{n^2}$, as they give no contribution in Eq. (\ref{trick}).}
\bes
&\int \Big(\prod_{a,b} dq_{ab} dp_{ab} d\hat{q}_{ab} d\hat{p}_{ab}\Big)\Big(\prod_{a} D \{z^a\}_{\mu>1}\Big)\exp\Big(-i\sum_{\mu>1}\sum_{a,b} \hat{p}_{ab} z_\mu^a z_\mu^b\\
&+i N \sum_{a,b}q_{ab}\hat{q}_{ab}-i\sum_{i} \sum_{a,b} \hat{q}_{ab} \sigma^a_i \sigma^b_i+i\alpha N\sum_{a,b} p_{ab} \hat{p}_{ab}+\frac{\beta \alpha^2}{2(1+\alpha)} \sum_{a,b=1}^n q^2_{ab} p_{ab}\Big).
\ees
The integral over the $z$ variables can be easily performed, leading to
\begin{align}
\int  \Big(\prod_{a} D \{z^a\}_{\mu>1}\Big) \exp\Big( -i\sum_{\mu>1}\sum_{a,b} \hat{p}_{ab} z_\mu^a z_\mu^b \Big)=\prod_{\mu>1}^K \det( \mathbb{I}+2i \hat{\mathbb{P}} )^{-1/2},
\end{align}
where $\mathbb{I}_{ab}=\delta_{ab}$ is the $n\times n$ identity matrix and $\hat{\mathbb{P}}_{ab}\equiv \hat{p}_{ab}$. We therefore end with the final expression for the noise term
\begin{equation}
\begin{split}
Z_{\text{noise}}=&\int \Big(\prod_{a,b} dq_{ab} dp_{ab} d\hat{q}_{ab} d\hat{p}_{ab}\Big)\exp\Big( -\frac{\alpha N}{2} \ln \det ( \mathbb{I}+2i \hat{\mathbb{P}} )\Big)\times\\
\times&\exp\Big(i N \sum_{a,b}q_{ab}\hat{q}_{ab}-i\sum_{i} \sum_{a,b} \hat{q}_{ab} \sigma^a_i \sigma^b_i+i\alpha N\sum_{a,b} p_{ab} \hat{p}_{ab}+ \frac{\beta \alpha^2}{2(1+\alpha)} \sum_{a,b=1}^n q^2_{ab} p_{ab}\Big).
\end{split}
\end{equation}
\subsection{The replica symmetric ansatz}\label{app:RS}
In this Section, we compute term by term the contributions appearning in \eqref{phi} after adopting the RS ansatz. Since the statistical pressure presents an overall factor $1/n$ in (\ref{phin}), only the $\mathcal{O}(n)$ terms are relevant for our purposes (since we have to evaluate the $n\to0$ limit). For the first term, the leading contribution is
\begin{align}
\sum_{a,b}q^2_{ab} p^2_{ab}\sim n(p_D-p q^2).
\end{align}
For the second one, we have
\begin{equation}
\begin{split}
\ln \det\left( \mathbb{I}- \frac{\beta \alpha}{1+\alpha} \mathbb{Q} \right)\sim n \ln \left(1-\frac{\beta \alpha}{1+\alpha} (1-q^2) \right)- n\frac{\beta \alpha}{1+\alpha}\frac{q^2}{1-\frac{\beta \alpha}{1+\alpha} (1-q^2) }.
\end{split}
\end{equation}
The $m$-dependent contribution is trivial, and reads
\begin{align}
\sum_a m_a^4=n\: m^4.
\end{align}
Finally, the last term can be straightforwardly evaluated as follows
\bes
&\avg{\ln \sum_{\boldsymbol{\sigma}}\exp\Big(\frac{\beta \alpha^2}{1+\alpha}\sum_{a,b} q_{ab} p_{ab} \sigma^a \sigma^b+\frac{2\beta}{1+\alpha} \xi \sum_a m^3_a \sigma^a\Big)}_\xi=\\
&=n\frac{\beta \alpha^2}{1+\alpha}(p_D-p q) + n \ln 2 + n \int Dx \:\ln \cosh \Big( \sqrt{2\frac{\beta \alpha^2}{1+\alpha} p q}\:x+ \frac{2\beta}{1+\alpha} m^3 \Big).
\ees
Putting all these results in the expression for the statistical pressure, we get the result \eqref{A}.

\section{Zero-temperature critical capacity analysis}
In order to estimate the zero-temperature critical capacity $\alpha_c(T=0)$, we start from the self-consistency equations \eqref{eq:selfc}. Upon eliminating the conjugate parameter $p$, we get
\begin{equation}\label{eq:self1}
\begin{split}
m  &=  \int Dx\,\tanh\left(\frac\beta{1+\alpha}\Big(\frac{\sqrt{2{ \alpha^3 q^3} }}{1-\frac{\beta \alpha}{1+\alpha}(1-q^2)}\:x + 2\:  m^3\Big)\right),\\
q  &=  \int Dx\,\tanh^2\left(\frac\beta{1+\alpha}\Big(\frac{\sqrt{2{ \alpha^3 q^3} }}{1-\frac{\beta \alpha}{1+\alpha}(1-q^2)}\:x + 2\:  m^3\Big)\right).
\end{split}
\end{equation}
In the limit $\beta\to \infty$, it is easy to check that $q\to1$, then $1-q^2 \to 0$ in the zero temperature limit. The quantity $C=\beta(1-q^2)$, which satisfies the self-consistency equation
\begin{equation}
C= \beta -\beta \Big(1-\frac{1+\alpha}{2\beta}\frac{\partial}{\partial (m^3)}\int Dx \tanh (g(m,q))\Big)^2,
\end{equation}
where $g(m,q)$ is the argument of the hyperbolic tangent in \eqref{eq:self1}, is finite in the $\beta\to\infty$ limit. Using $\tanh(\beta x)\to \text{sgn}(x)$ in the large $\beta$ limit, then the self-consistency equations can be evaluated as
\begin{equation}
\begin{split}
m&=\text{erf}\left(\frac{m^3}{\alpha^{3/2}}\Big(1-\frac\alpha{1+\alpha}(1-C)\Big)\right),\\
C&=(\alpha+1)\frac{1-\frac\alpha{1+\alpha}C}{\alpha^{3/2}}\frac2{\sqrt \pi}\exp\Big(-\frac{m^6}{\alpha^3}(1-\frac{\alpha}{1+\alpha}C)^2\Big).
\end{split}
\end{equation}
By introducing the quantity
\begin{equation}
t=\frac{m^3}{\alpha^{3/2}}\Big(1-\frac\alpha{1+\alpha}(1-C)\Big),
\end{equation}
after some rearrangements, we end up with a single equation
\begin{equation}\label{eq:zeroT}
t= \frac{1}{\alpha^{3/2}}\text{erf}^3(t)-\frac{2t}{\sqrt{\alpha\pi}}\exp(-t^2).
\end{equation}
Then, the critical storage capacity is the value of $\alpha$ leading to non-trivial solutions for Eq. \eqref{eq:zeroT}. A comparison between the two sides of the equation for various $\alpha$ values is reported in Fig. \ref{fig:zerot}.
\begin{figure}[h!]
	\centering
	\includegraphics[width=0.7\textwidth]{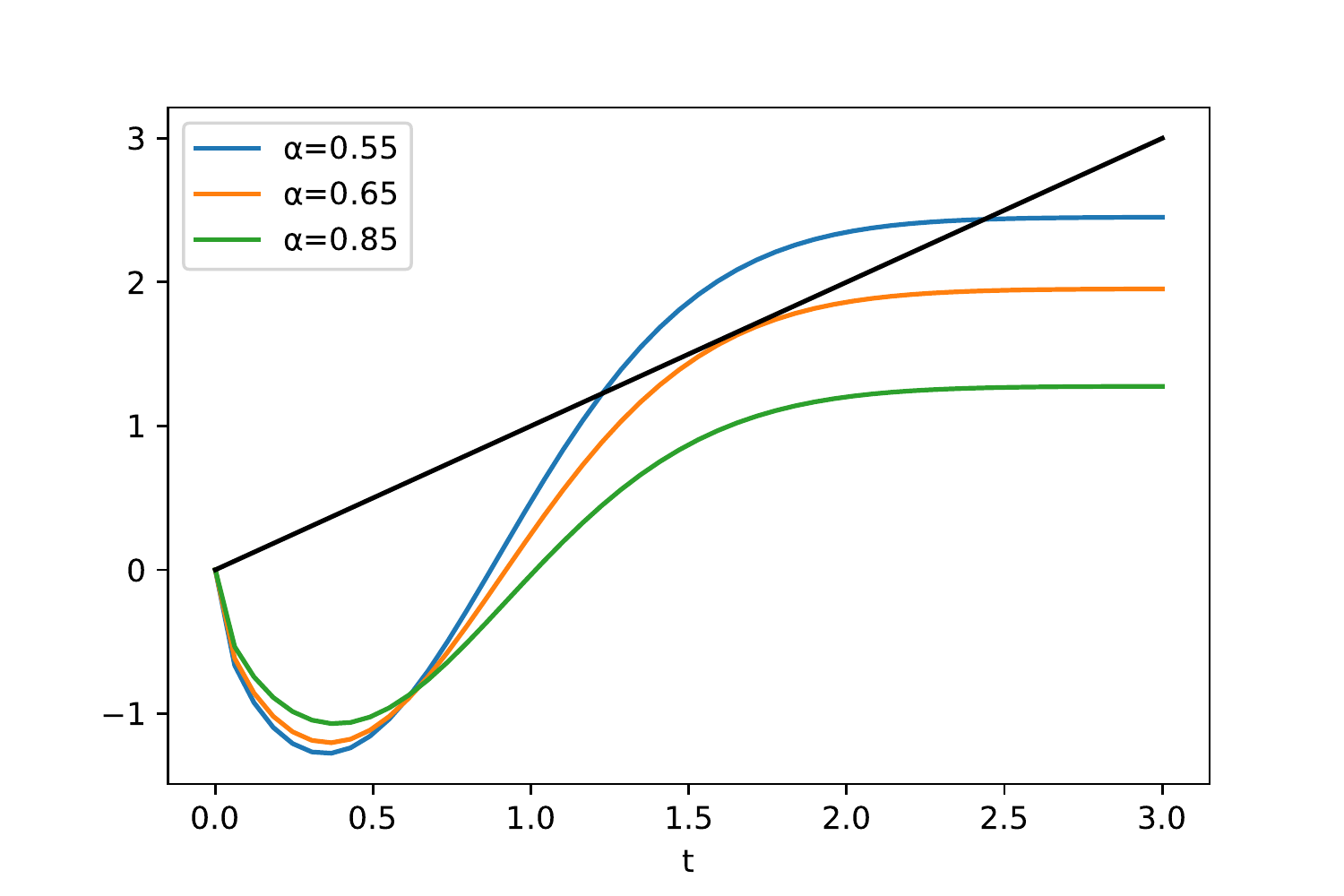}
	\caption{Comparison between the l.h.s. (black dashed line) and r.h.s. (blue solid curves) of Eq. \eqref{eq:zeroT} for $\alpha=0.55,0.65,0.85$.}\label{fig:zerot}
\end{figure}\\
By numerically solving the equation \eqref{eq:zeroT}, we found a critical storage capacity $\alpha_c(T=0)\simeq 0.651$, which is in perfect agreement with the phase diagram.

\section{Signal-to-noise analysis}

We here perform a signal-to-noise analysis, motivating the decomposition eq. \ref{decomposition}. Introducing the "internal" field $h_i$ seen by the $i$-th spin $\sigma_i$, defined as
	\begin{align}
	h_i=\frac{1}{2N^3} \sum_{\mu=1}^{K}\sum_{j,k,l=1}^{N} \eta_{ij}^\mu \eta_{kl}^\mu \sigma_j \sigma_k \sigma_l,
	\end{align}
	we can write the hamiltonian of the model as:
	\begin{align}
	H=-\sum_{i=1}^{N} h_{i} \sigma_i
	\end{align}
By virtue of the pattern decomposition, this field can be rewritten as
\begin{align}
h_i=\frac{1}{2N^3} \sum_{\mu=1}^{K}\sum_{j,k,l=1}^{N} \left(\xi_i^\mu \xi_j^\mu \xi_k^\mu \xi_l^\mu + \sqrt{K} \xi_i^\mu\xi_j^\mu J_{kl}^\mu + \sqrt{K} \xi_k^\mu\xi_l^\mu J_{ij}^\mu + K J_{ij}^\mu J_{kl}^\mu \right) \sigma_j \sigma_k \sigma_l
\end{align}
Probing the alignment to the pattern $\boldsymbol{\xi}^1=(\xi_1^1,..,\xi_N^1)$, we set $\boldsymbol{\sigma}=\boldsymbol{\xi}^1$, by which the following standard decomposition holds (we simply separate the "signal" characterized by $\mu=1$ from the "noise" $\mu>1$):
\begin{align}
h_i \sigma_i=\mathcal{S}+\mathcal{N}
\end{align}
where
	\begin{align}
\mathcal{S}=\frac{1}{2}\left[1+\frac{\sqrt{K}}{N}\sum_j J_{ij}^1 \xi_i^1\xi_j^1 + \frac{\sqrt{K}}{N^2} \sum_{k,l} J_{kl}^1 \xi_k^1\xi_l^1+ \frac{K}{N^3} \sum_{j,k,l}  \xi_i^1\xi_j^1\xi_k^1\xi_l^1 J_{il}^1 J_{kl}^1\right]
\end{align}
is the "signal" and 
\begin{equation}\label{eq:6}
\mathcal{N}=\frac{1}{2N^3}\sum_{\mu>1}^{K}\sum_{j,k,l=1}^{N} \left(\xi_i^\mu\xi_i^1 \xi_j^\mu \xi_j^1 \xi_k^\mu \xi_k^1 \xi_l^\mu \xi_l^1+ \sqrt{K} \xi_i^\mu \xi_i^1 \xi_j^\mu \xi_j^1 \xi_k^1 \xi_l^1 J_{kl}^\mu + \sqrt{K} \xi_k^\mu \xi_k^1 \xi_l^\mu \xi_l^1 \xi_i^1 \xi_j^1 J_{ij}^\mu + K \xi_i^1 \xi_j^1 \xi_k^1 \xi_l^1 J_{ij}^\mu J_{kl}^\mu \right)  
\end{equation}
the "noise". Recall that the $J_{ij}^\mu$ tensors are all $i.i.d.$ variables distributed as $\mathcal{N}(0,1)$. In order to compute the signal-to-noise ratio $\mathcal{S}/\mathcal{N}$, we firstly perform the standard Gaussian expectation $\mathbb{E}$ over the signal $\mathcal{S}$. This results in 
\begin{align}
\mathbb{E}\psq{\mathcal{S}}=\frac{1}{2}\left(1+\frac{K}{N^2}\right)\to \frac{1}{2}
\end{align}
as $N\to\infty$ in the thermodynamic limit. In order to get this result we consider that 
\begin{align}
\mathbb{E}\psq{J_{ij}^1}=\mathbb{E}\psq{J_{kl}^1}=0
\end{align}
and 
\begin{align}
\mathbb{E}\psq{J_{ij}^1 J_{kl}^1}=\delta_{ik} \delta_{jl}
\end{align}
and the fact that the  products similar to $J_{ij}^1 \xi_i^1\xi_j^1$ give new $i.i.d$ $\mathcal{N}(0,1)$ variables as the $J$'s are.\\
Consider now the noise $\mathcal{N}$. The first term in the parenthesis (\ref{eq:6}) can be decomposed in a sum of various contributions, given the four summations in $\mu,j,k,l$. We get a contribution by setting $j=k=l=i$, which is of order $\mathcal{O}(N^{-2})$ given the overall factor $1/N^3$ in front of each term in the noise; we have several contributions from $l\neq k$ with $k=j=i$, and cyclic permutations (i.e. $l\neq j$ with $k=j=i$ and so on), which overall give a contribution of order $\mathcal{O}(N^{-2})$; then we have to consider the terms coming from $l\neq i, k\neq i, j=i$ and similar, giving $\mathcal{O}(N^{-3/2})$ contributions and, the remaining ones coming from $l \neq i, k\neq i, j\neq i$ and similar, which are $\mathcal{O}(N^{-1})$. We see therefore that, in the thermodynamic limit, the first term in the noise is zero.\\
The remaining terms have to be evaluated via the Gaussian expectation operator, therefore we can easily apply similar considerations to those used in the evaluation of $\mathbb{E}\psq{\mathcal{S}}$. This results in vanishing contributions from the second and the third term in the noise decomposition in the thermodynamic limit. The only non-zero contribution comes from the last term, the fourth, which however is non-zero only for $k=i, j=l$, giving $\alpha^2/2$.\\
The ratio $\mathcal{S}/\mathcal{N}$ can now easily computed, giving $1/\alpha^2$, which is of order $\mathcal{O}(1)$. It can be shown that this is the minimal $\mathcal{S}/\mathcal{N}$ value given the decomposition in eq. \ref{decomposition}: attempting to overtake the linear load $K=\alpha N$ (e.g. by considering super-linear regimes such as $K\sim N^2$) leads to a vanishing signal to noise ratio in the thermodynamic limit. Hence, in super-linear regimes the network cannot retrieve any pattern of information: our decomposition eq. \ref{decomposition} is therefore the worst scenario from the network's point of view, i.e. it gives the maximal noise to the network maintaining its retrieval capability.


\begin{thebibliography}{99}

\bibitem{Agliari-Dantoni} E. Agliari, et al., {\em Parallel retrieval of correlated patterns: From Hopfield networks to Boltzmann machines}, Neur. Net. \textbf{38}, 52, (2013).

\bibitem{Agliari-PRL2012} E. Agliari, et al., {\em Multitasking associative networks}, Phys. Rev. Lett. \textbf{109}, 268101, (2012).

\bibitem{Agliari-Barattolo} E. Agliari, et al., {\em Neural Networks retrieving binary patterns in a sea of real ones}, J. Stat. Phys. \textbf{168}, 1085, (2017).
    
\bibitem{P-spin1} E. Agliari, et al., {\em Notes on the P-spin glass studied via Hamilton-Jacobi and smooth cavity techniques}, J. Math. Phys. \textbf{53}(6), 063304, (2012).

\bibitem{Francesco1} F. Alemanno, et al., {\em Dreaming neural networks: rigorous results}, JSTAT, in press (2019).

\bibitem{sej} E. Agliari, F. Alemanno, A. Barra. M. Centonze, A. Fachechi,  {\em Neural Networks with redundant representations: detecting the undetectable} Phys. Rev. Lett. (in press) 2019, available at arXiv at the link: https://arxiv.org/pdf/1911.12689.pdf

\bibitem{Amit} D.J. Amit, {\em Modeling brain functions}, Cambridge Univ. Press (1989).

\bibitem{Baldi} P. Baldi, S.S. Venkatesh, {\em Number of stable points for spin-glasses and neural networks of higher orders}, Phys. Rev. Lett. \textbf{58}.9:913, (1987).

\bibitem{Redu2} H. Barlow, {\em Redundancy reduction revisited}, Network: Comp. Neur. Sys. \textbf{12}, 241, (2001).

\bibitem{Albert1} A. Barra, M. Beccaria, A. Fachechi,{\em A new mechanical approach to handle generalized Hopfield neural networks}, Neural Networks (2018).

\bibitem{BarraEquivalenceRBMeAHN} A. Barra, A. Bernacchia, E. Santucci, P. Contucci, {\em On the equivalence of Hopfield networks and Boltzman machines}, Neural Networks \textbf{34}, 1-9, (2012).

\bibitem{Guerra} A. Barra, G. Genovese, F. Guerra, {\em The replica symmetric approximation of the analnical neural network}, J. Stat. Phys. \textbf{140}(4):784, (2010).

\bibitem{BarraPeter} A. Barra, G. Genovese, P. Sollich, D. Tantari,  {\em Phase diagram of restricted Boltzmann machines and generalized Hopfield networks with arbitrary priors}, Phys. Rev. E \textbf{97}(2):022310, (2018).

\bibitem{Huang}Huang, Haiping, and Yoshiyuki Kabashima. "Adaptive Thouless-Anderson-Palmer approach to inverse Ising problems with quenched random fields." Physical Review E 87.6 (2013): 062129.

\bibitem{Mingione} A. Barra, F. Guerra, E. Mingione, {\em Interpolating the Sherrington-Kirkpatrick replica trick}, Phil. Mag. \textbf{91}(1):78, (2012).

%\bibitem{spherical} D. Boll\`e, T.M. Nieuwenhuizen, I.P. Castillo, T. Verbeiren, {\em A spherical Hopfield model}, J.Phys.A: Math. $\&$ Theor. \textbf{36}(41):10269, (2003).

%\bibitem{Vulpiani} R. Benzi, A. Sutera, A. Vulpiani, {\em The mechanism of stochastic resonance}, J. Phys. A \textbf{14}(11):L453, (1981).

\bibitem{BovierPspin} A. Bovier, B. Niederhauser, {\em The spin-glass phase-transition in the Hopfield model with p-spin interactions}, arXiv cond-mat/0108235, (2001).

%\bibitem{Bovierbook} A. Bovier, P. Picco, {\em Mathematical aspects of spin glasses and neural networks}, Springer Press (2012).

\bibitem{Ching} T. Ching, et al., {\em Opportunities and obstacles for deep learning in biolny and medicine}, J. Roy. Soc. Interface \textbf{15}.141:20170387, (2018).

%\bibitem{Univ1} P. Carmona, Y. Hu, {\em Universality in Sherrington–Kirkpatrick's spin glass model}, Ann. Henri Poincar\`e \textbf{42}, 2, (2006).

\bibitem{Cocco} S. Cocco, S. Leibler, R. Monasson, {\em Neuronal couplings between retinal ganglion cells inferred by efficient inverse statistical physics methods}, Proc. Natl. Acad. Sci. \textbf{106}.33:14058, (2009).

\bibitem{Coolen} A.C.C. Coolen, R. K\"{u}hn, P. Sollich, {\em Theory of neural information processing systems}, Oxford Press (2005).

\bibitem{Decelle} A. Decelle, S. Hwang, J. Rocchi, D. Tantari, {\em Inverse problems for structured datasets using parallel TAP equations and RBM}, arXiv preprint, arXiv:1906.11988, (2019).

\bibitem{Redu1} M. Elad, {\em Sparse and redundant representation modeling: what next?}, IEEE Sign. Proc. Lett.s \textbf{19}, 12, (2012).

\bibitem{Albert2} A. Fachechi, E. Agliari, A. Barra, {\em Dreaming neural networks: forgetting spurious memories and reinforcing pure ones}, Neural Networks \textbf{112}, 24,  (2018).

\bibitem{Barbier}  M. Gabri\`e, et al., {\em Entropy and mutual information in models of deep neural networks}, Neural Information Processing Systems (Conf. Montreal), (2018).

\bibitem{Gardner} E. Gardner, {\em The space of interactions in neural network models}, J. Phys. A \textbf{21}(1):257, (1988).

\bibitem{P-spin2} E. Gardner, {\em Spin glasses with P-spin interactions}, Nuclear Phys. B \textbf{257}, 747, (1985).

\bibitem{Genovese} G. Genovese, {\em Universality in bipartite mean field spin glasses}, J. Math. Phys. \textbf{53}.(12):123304, (2012).

\bibitem{HintonLast} I. Goodfellow, Y. Bengio, A. Courville, {\em Deep Learning}, M.I.T. press (2017).

\bibitem{HintonAutoEncoder} G.E. Hinton, R.R. Salakhutdinov, {\em Reducing the dimensionality of data with neural networks}, Science \textbf{313}.5786: 504, (2006).

\bibitem{NatRev} A. Hosny, et al., {\em Artificial intelligence in radiolny}, Nat. Rev. Canc. \textbf{18}(8):500, (2018).

\bibitem{Hopfield} J.J. Hopfield, {\em Neural networks and physical systems with emergent collective computational abilities},  Proceedings of the national academy of sciences 79.8 (1982): 2554-2558.

\bibitem{Huang} H. Huang, Y. Kabashima, {\em Origin of the computational hardness for learning with binary synapses}, Phys. Rev. E \textbf{90}, 052813, (2014).

\bibitem{Florent} F. Krzakala, et al., {\em Statistical-physics-based reconstruction in compressed sensing}, Phys. Rev. X \textbf{2}(2):021005, (2012).

\bibitem{SNR-spike} K.H. Kim, S.J. Kim, {\em Neural spike sorting under nearly $0dB$ signal-to-noise ration using non-linear energy operator and artificial neural network classifier}, IEEE Trans. Biomed. Eng. \textbf{47}, 10, (2000).

\bibitem{40} D. Kirk, NVIDIA CUDA software and GPU parallel computing architecture, ISMM.7, \textbf{103}, (2007).

\bibitem{HopfieldKrotov_DAM} D. Krotov, J.J. Hopfield, {\em Dense associative memory for pattern recognition}, Adv. Neur. Inf. Proc. Sys. (2016).

\bibitem{HopfieldKrotov_DL} D. Krotov, J.J. Hopfield, {\em Dense associative memory is robust to adversarial inputs}, Neur. Comput. \textbf{30}.12:3151-3167, (2018).

\bibitem{DirSeq} J. Lehnert, M.B. Pursley, {\em Error probabilities for binary direct-sequence spread-spectrum communications with random signature sequences}, IEEE Trans. Comm. \textbf{35}(1):87, (1987).

\bibitem{DL0}  Y. LeCun, Y. Bengio, G. Hinton, {\em Deep learning}, Nature \textbf{521}.7553:436, (2015).

\bibitem{OverComplete} M.S. Lewicki, T.J. Sejnowski, {\em Learning overcomplete representations}, Neur. Comput. \textbf{12}(2):337, (2000).

\bibitem{WGAC} B. Li; R. Di Fazio, A. Zeira, {\em A low bias algorithm to estimate negative SNRs in an AWGN channel}, IEEE Comm. Lett.s \textbf{6}(11):469, (2002).

\bibitem{Remi} O.C. Martin, R. Monasson, R. Zecchina, {\em Statistical mechanics methods and phase transitions in optimization problems}, Theor. Comp. Sci. \textbf{265}(1-2),3, (2001).

\bibitem{Metha} P. Mehta, D.J. Schwab, {\em An exact mapping between the variational renormalization group and deep learning}, arXiv preprint arXiv:1410.3831, (2014).

\bibitem{Mezard} M. Mezard, {\em Mean-field message-passing equations in the Hopfield model and its generalizations}, Phys. Rev. E \textbf{95}(2), 022117, (2017).


\bibitem{62} J. Nickolls, et al., Scalable parallel programming with CUDA, Queue \textbf{6}(2):40, (2008).

%\bibitem{Hugo} R. Salakhutdinov, H. Larochelle, {\em Efficient learning of deep Boltzmann machines}, Proc. thirteenth int. conf. on artificial intelligence and statistics, 693, (2010).

\bibitem{Hinton1} R. Salakhutdinov, G. Hinton, {\em Deep Boltzmann machines}, Artificial Intelligence and Statistics (2009).

\bibitem{arc1} J. Schmidhuber, Deep learning in neural networks: An overview, Neural networks \textbf{61}:85, (2015).


\bibitem{Semio} T.J. Sejnowski, {\em Higher-order Boltzmann machines}, Conf. Proc. \textbf{151}:neural nets for computing (1986).

\bibitem{SNR-wall} R. Tandra. A. Sahai, {\em SNR walls for signal detection}, IEEE Select. Topics Sign. Proc. \textbf{2}, 1, (2008).

\bibitem{WWW} see $https://lionbridge.ai/datasets/the-50-best-free-datasets-for-machine-learning/$

\bibitem{Autoencoder} J. Xie, L. Xu, E. Chen, {\em Image denoising and inpainting with deep neural networks},  Advanc. Neur. Inform. Proc. Sys. 341-349, (2012).

\bibitem{Giuiani} J.P. Zbilut, A. Giuliani, C.L. Webber Jr., {\em Detecting deterministic signals in exceptionally noisy environments using cross-recurrence quantification}, Phys. Lett. A \textbf{246}, 122, (1998).


\end{thebibliography}
\end{document}